\newtheorem{theorem}{\indent Theorem}[section]
\newtheorem{proposition}[theorem]{\indent Proposition}
\newtheorem{EXAMPLE}{\indent Example}[section]
\newtheorem{definition}{\indent Definition}[section]
\newenvironment{example}{\begin{EXAMPLE}\rm}{\rm\end{EXAMPLE}}
\newcommand{\code}{{\mathcal{C}}}
\newcommand{\cB}{{\mathcal{B}}}
\newcommand{\cA}{{\mathcal{A}}}
\newcommand{\lsf}{{\mathsf{L}}}
\newcommand{\dist}{{\mathsf{d}}}
\newcommand{\sE}{{\mathsf{E}}}
\newcommand{\cay}{{\mathsf{Cay}}}
\newcommand{\Sn}{{\mathbb S}_n}
\newlength{\Algwidth}
\title{New Bounds for Permutation Codes in Ulam Metric}
\author{{\bf Faruk G\"olo\v{g}lu, J\"uri Lember, Ago-Erik Riet, and Vitaly Skachek} \\ 
Faculty of Mathematics and Computer Science\\ 
University of Tartu, Tartu 50409, Estonia}
\begin{document}


\maketitle

\begin{abstract}
New bounds on the cardinality of permutation codes equipped with the Ulam distance are presented. 
First, an integer-programming upper bound is derived, which improves on the Singleton-type upper bound 
in the literature for some lengths. Second, several probabilistic lower bounds are developed, which improve on the known lower 
bounds for large minimum distances. The results of a computer search for permutation codes are also presented. 
\end{abstract}

\begin{keywords}
Permutation codes, rank modulation, Singleton bound, sphere-packing bound, Ulam distance. 
\end{keywords}

\section{Introduction}

A permutation code is a subset of the symmetric group $\Sn$, equipped with a distance metric. 
Permutation codes are of potential use in various applications, such as 
communications over Gaussian channels~\cite{32151}, \cite{1445610}, power-line communications~\cite{Blake19791}, \cite{1302307}, and coding for flash memories used with rank modulation~\cite{5205972}. 
Permutation codes were extensively studied in the literature over the last decades. In most of these studies, permutation codes are equipped with the Hamming and the Kendall $\tau$ metric~\cite{chadwick69}, \cite{diaconis1977spearman}, \cite{kendall1970rankcorrelation}.

Permutation codes were recently proposed for storing information in non-volatile (flash) memories under \emph{rank modulation}~\cite{5205972}, \cite{5452201}, \cite{5700265}, \cite{wadayama}. The main idea of the rank modulation scheme is that the information is stored in the form of rankings of the cell charges, rather than in terms of the absolute values of the charges. Rank-modulation codes represent a family of codes capable of handling errors of the form 
of \emph{adjacent transpositions}~\cite{5485013}, \cite{6034261} or \emph{translocations}~\cite{farnoud2013}. Such error patterns are typical for memory systems, where leakage of electric charge occurs over time.

There are two types of errors where a permutation code equipped with the Ulam or Kendall $\tau$ metric could be of use. One such type is overshoot errors, in which a cell receives more charge than it is supposed to. The second type is the errors, in which a defective cell loses charge more quickly than normal. Both of these types of errors constitute one error in the Ulam metric or a number of errors in the Kendall $\tau$ metric. Thus, codes in the Ulam or Kendall $\tau$ metrics seem to be appropriate for error detection and correction in the paradigm of rank modulation.

The problem of estimating the maximum size of a code in the Ulam metric for given parameters is very difficult. Different mathematical tools could be applied to this problem. In this work, we demonstrate that novel bounds on the maximum size of a code can be obtained by an integer-programming method and by probability estimation techniques. These two approaches deal with different regimes: the probability bounds are useful for large $n$ (some results, like Proposition~\ref{propLD}, are of the asymptotic nature only), whilst the integer-programming approach works well with (relatively) small $n$.

\section{Notation}

Denote by $\mathbb{Z}_{0}^{+}$ the set of non-negative integers. We also use the notation $[n] \triangleq \{ 1, 2, \cdots, n \}$. 

A \emph{permutation} $\sigma \, : \, [n] \rightarrow [n]$ is a bijection. Let $\Sn$ de\-no\-te the set of all permutations of the set $[n]$, i.e., the symmetric group of order $n!$. For any $\sigma \in \Sn$, we write $\sigma = [\sigma(1), \sigma(2), \cdots, \sigma(n)]$, where $\sigma(i)$ is the image of $i \in [n]$ under the permutation $\sigma$. This is called the one-line notation of permutation $\sigma$. The identity permutation $[1, 2, \cdots, n]$ is denoted by $e$, while $\sigma^{-1}$ stands for the inverse of the permutation~$\sigma$.  

Let $\dist\; : \; \Sn \times \Sn \rightarrow \mathbb{Z}_{0}^{+}$ be a metric defined for pairs of permutations. 
A \emph{permutation code} of length $n$ and minimum distance $d$ in a metric $\dist$ is a subset $\code$ of $\Sn$, such that for all $ \tau, \sigma \in \code$, $\tau \neq \sigma$, we have $\dist ( \tau, \sigma)\ge d$. Such a code will be also called an $(n, d)$ code in a metric $\dist$. 

\begin{definition}
Assume that $1\le i < j\le n$. A permutation $\tau \in \Sn$ is a \emph{right translocation} if
\[
\tau=[1, \cdots, i-1, i+1, i+2, \cdots, j, i, j+1, \cdots, n] \; . 
\]
A permutation $\tau \in \Sn$ is a \emph{left translocation} if 
\[ 
\tau = [1, \cdots, j-1, i, j, j+1, \cdots, i-1, i+1, \cdots, n] \; . 
\] 
\end{definition}

Next, we define the composition of two permutations. 
\begin{definition}
Let $\tau$ and $\sigma$ be two permutations in $\Sn$. Then, their composition $\tau \sigma$ is a permutation in $\Sn$ defined as 
\[
\forall i \in [n] \; : \; (\tau \sigma)(i) = \tau(\sigma(i)) \; . 
\]
\end{definition}

Under composition of permutations, $\Sn$ forms a group, called the symmetric group of order $n$.

\begin{definition} 

The \emph{Ulam distance} $\dist_U(\sigma, \rho)$ is the smallest integer $m$ such that there exists a sequence of (right and left) {\bf translocations} $\tau_1, \tau_2, ..., \tau_m$, such that $\rho = \sigma \tau_1 \tau_2 \cdots \tau_m$. 
\end{definition}

\begin{definition} 
A subsequence of length $m$ of $\sigma=[\sigma(1),\ldots,\sigma(n)]$ is a sequence of the form $\left[ \sigma(i_1),\ldots, \sigma(i_m)\right]$, where $i_1 < i_2 < \ldots i_m$. Let $\tau,\sigma \in \Sn$. The \emph{longest common subsequence} of $\tau$ and $\sigma$ is a subsequence of both 
$\tau$ and $\sigma$ of the longest possible length. 
\end{definition}

We denote the length of a longest common subsequence of $\tau$ and $\sigma$ by $\lsf \left(\tau,\sigma \right)$. Similarly, $\lsf (\sigma) \triangleq 
\lsf \left(\sigma, e \right)$, the length of a longest increasing subsequence of $\tau$. It is well known~\cite{farnoud2013} that for any $\sigma \in \Sn$, 
\begin{equation}
\dist_U ( \tau, \sigma ) = n - \lsf( \tau, \sigma) \; . 
\label{eq:lcs}
\end{equation}

\section{Integer-programming bound for the Ulam metric}

\subsection{Known bounds} 

Denote by $\cA(n,d)$ the maximum size of a code over $\Sn$  
equipped with the Ulam metric. The following theorem provides bounds
on $\cA(n,d)$~\cite{farnoud2013}. 

\begin{proposition}
\label{prop:tl-UB}
For all $n,d \in \mathbb{Z}_{0}^{+}$ with $n \ge d \ge 1$, 
\begin{equation}
\frac{\left(n-d+1\right)!}{\binom{n}{d-1}} \le \cA (n,d) \le (n-d+1) ! \; . 
\label{1}
\end{equation}
\end{proposition}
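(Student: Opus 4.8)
The plan is to translate both inequalities into statements about longest common subsequences via~\eqref{eq:lcs}, proving the upper bound by an explicit injection and the lower bound by a sphere-covering (Gilbert--Varshamov) argument. The basic reformulation is this: by~\eqref{eq:lcs}, a code $\code \subseteq \Sn$ has minimum Ulam distance at least $d$ if and only if every pair of distinct codewords $\sigma,\tau$ satisfies $\lsf(\sigma,\tau) \le n-d$; equivalently, no two distinct codewords admit a common subsequence of length $n-d+1$.

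For the upper bound I would define a map $\phi\colon \Sn \to \sss_{n-d+1}$ sending $\sigma$ to the relative order in which the values $1,2,\ldots,n-d+1$ appear when $\sigma$ is read in one-line notation (that is, the subsequence of $\sigma$ obtained by deleting every entry exceeding $n-d+1$). There are exactly $(n-d+1)!$ possible images. The key observation is that if $\phi(\sigma)=\phi(\tau)$, then the values $1,\ldots,n-d+1$ occur in the same relative order in both permutations, so this common ordering is a common subsequence of length $n-d+1$, forcing $\lsf(\sigma,\tau)\ge n-d+1$ and hence $\dist_U(\sigma,\tau)\le d-1$. Thus $\phi$ is injective on any code of minimum distance $d$, giving $\cA(n,d)\le (n-d+1)!$.

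For the lower bound I would invoke the sphere-covering bound $\cA(n,d)\ge n!/V(n,d-1)$, where $V(n,r)$ is the size of a radius-$r$ ball: taking a code $\code$ maximal under inclusion, its radius-$(d-1)$ balls must cover $\Sn$, since otherwise an uncovered permutation could be adjoined. It then remains to bound $V(n,d-1)$ from above. Fixing a center $\sigma$, every $\pi$ with $\dist_U(\sigma,\pi)\le d-1$ admits a common subsequence with $\sigma$ of length $n-d+1$; choosing the value set of such a subsequence ($\binom{n}{d-1}$ ways, as its complement has $d-1$ elements), the $n-d+1$ positions it occupies in $\pi$ ($\binom{n}{d-1}$ ways, the entries there being forced into the order dictated by $\sigma$), and the placement of the remaining $d-1$ values ($(d-1)!$ ways) produces every such $\pi$ at least once. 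This yields $V(n,d-1)\le \binom{n}{d-1}^2 (d-1)!$, and substituting gives $\cA(n,d) \ge n!/\bigl(\binom{n}{d-1}^2 (d-1)!\bigr)$, which simplifies to $(n-d+1)!/\binom{n}{d-1}$ after expanding the binomial coefficients.

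The two points needing care are the claim that the relative-order map detects a long common subsequence, which is a short combinatorial verification, and the overcounting in the ball estimate. The displayed count of triples generates each $\pi$ in the ball possibly many times (a permutation may have several common subsequences with $\sigma$ of the required length), so it is only an \emph{upper} bound on $V(n,d-1)$; but an upper bound on the ball size is precisely what the sphere-covering argument consumes. I expect this ball-size estimate to be the main step to get right, since one must confirm that every permutation in the ball is generated by at least one triple while tolerating the overcounting.
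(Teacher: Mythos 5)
Your proposal is correct and follows essentially the same route the paper relies on: the upper bound via the injection sending each codeword to the relative order of $1,\ldots,n-d+1$ is the standard Singleton-type argument from~\cite{farnoud2013} (the same fact the paper reuses to set up its integer program), while your maximal-code covering argument with the ball estimate is precisely the paper's lower-bound proof combining~(\ref{l1}), (\ref{sphere}) and~(\ref{simple}). Indeed, your count of $\binom{n}{d-1}^2(d-1)!$ triples is exactly the combinatorial (first-moment) form of the probabilistic estimate~(\ref{simple}), so the two lower-bound arguments coincide.
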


The right-hand side of~(\ref{1}) will be referred to as the \emph{Singleton bound} in the sequel.  
 
\subsection{Integer-programming bound}

In this section, we derive an integer-programming upper bound on $\cA (n,d)$. 

Let $\code \subseteq \Sn$ be a permutation code of Ulam distance $d$. It follows from~(\ref{eq:lcs}), that any subsequence 
of length $n-d+1$ appears at most once in any codeword of $\code$ (in other words, any two codewords in $\code$ cannot have the
same subsequence of length $n-d+1$ or more). We use this fact in order to define integer variables $X_{b,a}$ for all 
$1 \le a \le n$, $1 \le b \le n$. More specifically, 
\begin{eqnarray*}
   X_{b,a} = \left| \left\{ \sigma \in \code \; : \; \sigma(b) = a \right\} \right| \; . 
\end{eqnarray*}
In other words, $X_{b,a}$ counts a number of codewords with $a$ in position $b$. 

Assume that $\sigma \in \code$, such that $\sigma(b) = a$. Then, the number of different subsequences of $\sigma$ of length $n - d + 1$ of the form $(\underbrace{\bullet, \cdots, \bullet}_{\ell}, a, \underbrace{\bullet, \cdots, \bullet}_{n-d-\ell})$, where $\sigma(b) = a$ and $1 \le \ell \le n$, is given by 
\[
{b - 1 \choose \ell} \cdot { n - b \choose n - d - \ell} \; . 
\] 

On the other hand, there are $\frac{(n - 1)!}{(d-1)!}$ different sequences of the form $(\underbrace{\bullet, \cdots, \bullet}_{\ell}, a, \underbrace{\bullet, \cdots, \bullet}_{n-d-\ell})$. 

By a simple counting argument, we obtain that for all $a \in [n]$, 
\begin{equation}
\sum_{b = 1}^n {b - 1 \choose \ell} \cdot { n - b \choose n - d - \ell} \cdot X_{b,a} \le \frac{(n-1)!}{(d-1)!} \; .
\label{eq:constraint}
\end{equation}

The total number of the codewords can be obtained, for example, by $\sum_{a = 1}^n X_{b,a}$, for any $b \in [n]$. Therefore, we add constraints 
\[
\forall b \in[n-1] \; : \; \sum_{a = 1}^n X_{b,a} = \sum_{a = 1}^n X_{b+1,a} \; , 
\]
and an objective function
\[
\max \sum_{a = 1}^n X_{1,a} \; . 
\]

By combining this, we obtain the following linear program in Figure~\ref{fig:lp-general}, where its maximum provides an upper bound on $\cA (n,d)$. 

\begin{figure}[ht]
\hrule
\begin{eqnarray*}
\begin{array}{rll}
\mbox{\bf max} & \sum_{a = 1}^n X_{1,a} & \\
\mbox{\bf s.t.} & \forall a \in [n], \, \forall \ell \in [n-d+1] \; : & \\
&& \hspace{-21ex} \displaystyle \sum_{b = 1}^n {b - 1 \choose \ell} \cdot { n - b \choose n - d - \ell} \cdot X_{b,a} \le \frac{(n-1)!}{(d-1)!} \\
& \forall b \in[n-1] \; :  &  \displaystyle \sum_{a = 1}^n X_{b,a} = \sum_{a = 1}^n X_{b+1,a} \\ 
& \forall a,b \in [n] \; : & X_{b,a} \ge 0 
\end{array}
\end{eqnarray*}
\hrule
\caption{General integer-programming bound.}
\label{fig:lp-general}
\end{figure}

Next, observe that $X_{b,a}$ should be an integer. Therefore, we are interested in an integral solution to this linear-programming problem. 
This provides a tighter upper bound than the fractional solution to the same LP problem.  

\begin{example}
Take $n = 5$ and $d = 3$. The corresponding integer linear-programming problem is shown in Figure~\ref{fig:example}.  
\begin{figure}[ht]
\hrule
\begin{eqnarray*}
\begin{array}{rcl}
\mbox{\bf max} & \sum_{a=1}^5 X_{1,a} & \\
\mbox{\bf s.t.} & \forall a \in [5] \; : & \\
&& \hspace{-15ex} \displaystyle 1 \cdot {4 \choose 2} \cdot X_{1,a} + 1 \cdot {3 \choose 2} \cdot X_{2,a} + 1 \cdot {2 \choose 2} \cdot X_{3,a} \le 12 \\
                && \hspace{-5ex} \displaystyle 1 \cdot 3 \cdot X_{2,a} + 2 \cdot 2 \cdot X_{3,a} + 3 \cdot 1 \cdot X_{4,a} \le 12 \\
								&& \hspace{-15ex} \displaystyle {2 \choose 2} \cdot 1 \cdot X_{3,a} + {3 \choose 2} \cdot 1 \cdot X_{4,a} + {4 \choose 2} \cdot 1 \cdot X_{5,a} \le 12 \\
& \forall b \in [4] \; :  &  \displaystyle \sum_{a = 1}^5 X_{b,a} = \sum_{a = 1}^5 X_{b+1,a} \\ 
& \forall a,b \in [5] \; : & X_{b,a} \ge 0 
\end{array}
\end{eqnarray*}
\hrule
\caption{Integer program for $n=5$ and $d=3$.}
\label{fig:example}
\end{figure}

After simplification, this integer-programming problem becomes as in Figure~\ref{fig:simplified}. 

\begin{figure}[ht]
\hrule
\begin{eqnarray*}
\begin{array}{rcl}
\mbox{\bf max} & \sum_{a=1}^5 X_{1,a} & \\
\mbox{\bf s.t.} & \forall a \in [n] \; : & 6 X_{1,a} + 3 X_{2,a} + X_{3,a} \le 12 \\
                && 3 X_{2,a} + 4 X_{3,a} + 3 X_{4,a} \le 12 \\
								&& X_{3,a} + 3 X_{4,a} + 6 X_{5,a} \le 12 \\
& \forall b \in [4] \; :  &  \displaystyle \sum_{a = 1}^5 X_{b,a} = \sum_{a = 1}^5 X_{b+1,a} \\ 
& \forall a,b \in [5] \; : & X_{b,a} \ge 0 
\end{array}
\end{eqnarray*}
\hrule
\caption{Simplified integer program. }
\label{fig:simplified}
\end{figure}

By solving the integer-programming problem in Figure~\ref{fig:simplified}, we obtain that the maximum of the objective is obtained, for example, for $X_{b,a} = 1$ for all $a, b \in [n]$. 
This corresponds to the upper bound $\cA (n,d) \le 5$, which improves on the value $6$ obtained by using the Singleton bound. The actual 
value of $\cA (n,d)$ in this case is $4$. 
\end{example}

We remark, that the proposed integer LP problem can be further tightened by using additional constraints. For example, one can define additional 
variables $X_{(b_1, a_1), (b_2, a_2), \cdots, (b_t, a_t)}$, where all $a_i, b_i, t \in [n]$. Such a variable will count the number of permutations $\sigma$,
such that $\sigma(b_i) = a_i$ for all $i \in [t]$. Additional constraints can be defined in a manner similar to~(\ref{eq:constraint}), with respect to 
variables $X_{(b_1, a_1), (b_2, a_2), \cdots, (b_t, a_t)}$. 

\section{Probabilistic bounds} 
\subsection{Asymptotic version of the lower bound}
In what follows, we consider an $(n,d)$ Ulam code. Denote $\Delta \triangleq d-1$. 
Recall the bounds in Proposition~\ref{prop:tl-UB}. 
By using
\begin{multline*}
m!\geq \left({m\over e}\right)^m=\exp\left[m(\ln m -1)\right] \\ \mbox{ and }
\quad {1\over m+1}\exp[m h_e(\alpha)] \leq \binom{m}{\alpha m} \leq
\exp[m h_e(\alpha)] \; ,
\end{multline*}
we obtain
\begin{equation}
{\left(n-\Delta\right)!\over \binom{n}{\Delta}} \; 
\geq \;  \exp \left[\left(n-\Delta\right)\left(\ln
(n-\Delta)-1\right)- n h_e\left({\Delta \over n}\right) \right] \; .
\label{tok}
\end{equation}
Here $h_e(p)$, where $p\in [0,1]$, is the binary entropy function with base $e$, i.e. 
$h_e(p) \triangleq -p\ln(p)-(1-p)\ln(1-p)$.
Hence,~(\ref{tok}) is an asymptotic lower bound on $\cA(n,d)$. 
Consider a special case of it, when $\Delta=n-c\sqrt{n}$, $c$ is a constant. Then $1-{\Delta\over n}={c/
\sqrt{n}}$, and (\ref{tok}) becomes
\begin{eqnarray}
&& \hspace{-5ex} \exp\left[\sqrt{n}c \left({1\over 2}\ln n+\ln c -1\right)-n h_e \left(1-{c\over
\sqrt{n}}\right)\right] \nonumber \\
& = & \exp\bigg[\sqrt{n}c \left(2\ln c -1\right) \nonumber \\
&& \hspace{8ex} + \; n\left(1-{c\over \sqrt{n}}\right) \ln \left(1-{c\over
\sqrt{n}}\right)\bigg] \nonumber \\
& \geq & \exp\left[\sqrt{n}c \left(2\ln c -1\right)-c\sqrt{n}\right] \nonumber \\
& = & \exp\left[2\sqrt{n}c \cdot \left(\ln c -1\right)\right] \; .
\label{eq:reference-bound}
\end{eqnarray}

Hence, with  $\Delta_n=n-c\sqrt{n}$, we have
\begin{equation}
\lim\inf_n {1\over \sqrt{n}}  \ln \left( {(c\sqrt{n})! \over \binom{n}{c\sqrt{n}}}\right) \; \geq \; 
2c \cdot (\ln c-1) \; .
\label{liminf}
\end{equation}

Let us now show that $2c \cdot (\ln c-1)$ is actually the limit. Indeed,
it holds
$m! \; = \; \left(1+o(1)\right)\sqrt{2\pi m}\left({m\over e}\right)^m$ (for large values of $m$).
Then, provided that $(n-\Delta)\to \infty$,
\begin{multline*}
\ln \left({(n-\Delta)! \over \binom{n}{\Delta}}\right) \; \leq \; 
(n-\Delta)\left(\ln (n-\Delta)-1\right) \\
+ \; {1\over 2}\ln
(2\pi(n-\Delta)) + \ln (1+o(1)) \\
- \; n h_e \left( {\Delta\over
n} \right)+\ln(n+1) \; .
\end{multline*} 
Hence, since $\Delta=n-c\sqrt{n}$,
\begin{align*}
&\ln \left({(c\sqrt{n})! \over \binom{n}{c\sqrt{n}}}\right) \; \leq \;\\
&  \hspace{4ex} 
c\sqrt{n} \left(2\ln c-1+ {\sqrt{n}\over c}(1-{c\over \sqrt{n}})\ln (1-{c\over
\sqrt{n}})\right) \\
& \hspace{8ex} \; + \; {1\over 2}\ln (2\pi c\sqrt{n})+\ln (1+o(1))+\ln({n}+1) \; .
\end{align*}
Since
$$
{\sqrt{n}\over c}\ln \left(1-{c\over \sqrt{n}}\right)\to -1 \; ,
$$
we obtain that
\begin{equation}
\limsup_n  {1\over \sqrt{n}} \ln \left({(c\sqrt{n})! \over
\binom{n}{c\sqrt{n}}}\right)\leq 2c(\ln c-1) \; .
\label{eq:lowerb}
\end{equation} 
By combining~(\ref{liminf}) with~(\ref{eq:lowerb}), we have the following result. 
\begin{proposition}
Let $\Delta=n-c\sqrt{n}$, where $c$ is a constant. Then, 
\begin{equation}
\lim_n {1\over \sqrt{n}}\ln \left({(n-\Delta_n)! \over \binom{n}{\Delta_n}}\right)
= 2c \cdot (\ln c-1) \; . 
\label{lim}
\end{equation}
\end{proposition}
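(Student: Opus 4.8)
The plan is to prove the equality by a two-sided squeeze, establishing separately that the limit inferior of the normalized logarithm is at least $2c(\ln c - 1)$ and that its limit superior is at most $2c(\ln c - 1)$; since $\liminf \le \limsup$ always holds, the two must coincide and the limit exists. The first thing I would record is the simplification built into the hypothesis: $\Delta_n = n - c\sqrt{n}$ forces $n - \Delta_n = c\sqrt{n}$, and because $\binom{n}{\Delta_n} = \binom{n}{n-\Delta_n} = \binom{n}{c\sqrt{n}}$, the quantity inside the logarithm is exactly $(c\sqrt{n})! / \binom{n}{c\sqrt{n}}$. Everything then reduces to controlling $\ln (c\sqrt{n})! - \ln \binom{n}{c\sqrt{n}}$ up to an additive error of $o(\sqrt{n})$.

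For the lower bound on the limit inferior I would feed the crude estimates $m! \ge (m/e)^m$ and $\binom{m}{\alpha m} \le \exp(m\, h_e(\alpha))$ into the ratio with $\alpha = c/\sqrt{n}$. This is precisely the computation in~(\ref{tok})--(\ref{eq:reference-bound}): after writing $h_e(c/\sqrt{n})$ explicitly and invoking the elementary inequality $(1 - c/\sqrt{n})\ln(1 - c/\sqrt{n}) \ge -c/\sqrt{n}$ (which follows since $g(x) = (1-x)\ln(1-x) + x$ has $g(0)=0$ and $g'(x) = -\ln(1-x) \ge 0$ on $(0,1)$), the dominant contributions collapse to $2\sqrt{n}\,c(\ln c - 1)$. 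This yields the bound~(\ref{liminf}), namely $\liminf_n \tfrac{1}{\sqrt{n}} \ln\big((c\sqrt{n})!/\binom{n}{c\sqrt{n}}\big) \ge 2c(\ln c - 1)$.

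For the matching upper bound I would replace the crude factorial estimate by the sharp Stirling form $m! = (1+o(1))\sqrt{2\pi m}\,(m/e)^m$, paired with $\binom{m}{\alpha m} \ge \tfrac{1}{m+1}\exp(m\, h_e(\alpha))$. The polynomial prefactors that appear, namely $\tfrac12\ln(2\pi c\sqrt{n})$, $\ln(1+o(1))$ and $\ln(n+1)$, are each $O(\ln n)$ and therefore negligible after division by $\sqrt{n}$. This is the computation culminating in~(\ref{eq:lowerb}) and gives $\limsup_n \tfrac{1}{\sqrt{n}} \ln\big((c\sqrt{n})!/\binom{n}{c\sqrt{n}}\big) \le 2c(\ln c - 1)$. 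Combining this with the previous paragraph closes the squeeze and produces the stated limit.

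The step I expect to be the main obstacle is the precise bookkeeping of the entropy contribution $n(1 - c/\sqrt{n})\ln(1 - c/\sqrt{n})$, because both $\ln(c\sqrt{n})!$ and the term $n\, h_e(c/\sqrt{n})$ individually carry a $\sqrt{n}\ln n$ piece that must cancel exactly for the answer to stay finite. The crucial input is the asymptotic $\tfrac{\sqrt{n}}{c}\ln(1 - c/\sqrt{n}) \to -1$, equivalently the first-order expansion $\ln(1-x) = -x + O(x^2)$; this is exactly what guarantees that the two $\sqrt{n}\ln n$ contributions annihilate and leave the clean coefficient $2c(\ln c - 1)$ in front of $\sqrt{n}$. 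Tracking the second-order remainder (the $O(1)$-sized $c^2/2$ correction) is needed only to confirm that it contributes at order $O(1)$ and hence disappears once we divide by $\sqrt{n}$ and pass to the limit.
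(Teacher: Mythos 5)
Your proposal is correct and follows essentially the same route as the paper: the lower bound on the $\liminf$ via $m! \ge (m/e)^m$, the entropy upper bound on the binomial coefficient, and the inequality $(1-x)\ln(1-x) \ge -x$, matched by an upper bound on the $\limsup$ via sharp Stirling, the $\tfrac{1}{m+1}\exp(m\,h_e(\alpha))$ lower bound on the binomial coefficient, and the asymptotic $\tfrac{\sqrt{n}}{c}\ln(1-c/\sqrt{n}) \to -1$. The paper's proof is exactly this combination of its displays~(\ref{liminf}) and~(\ref{eq:lowerb}), so there is nothing to add.
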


\subsection{Bounds using longest increasing subsequence} 

The following bounds hold for the Ulam metric (for the Kendall $\tau$ metric see, e.g.,~\cite{5485013}):
\begin{equation}\label{l1}
{n! \over |\cB(\Delta)|}\leq {\cal A}(n,d)\leq {n! \over  \left|\cB \left({\Delta \over 2} \right) \right|} \; ,
\end{equation}
where $|\cB(r)|\triangleq|\{\sigma : \dist_U(e,\sigma)\leq r\}|$ is the number of
permutations in the ball centered at the identity $e$ and having radius $r$. 
The
number of permutations in a ball $\cB(r)$ is difficult to estimate.
However, under the uniform distribution over all permutations in
$\mathbb{S}_n$, the ratio $|\cB(r)|/n!$ is just the probability that
a randomly chosen permutation is at distance at most $r$ from $e$. 
In terms of the longest increasing subsequences, thus,
\begin{equation}
\label{sphere}
{|\cB(r)|\over n!}=P(n-\lsf_n\leq r)=P(\lsf_n\geq n-r) \; ,
\end{equation} 
where $\lsf_n$ is the
length of a longest increasing subsequence of a random permutation under the 
uniform distribution. In terms of $\lsf_n$, the inequalities~(\ref{l1})
can be rewritten as
\begin{equation*}
{1\over P(\lsf_n\geq n-\Delta)}\leq {\cal A}(n,\Delta+1)\leq {1\over P(\lsf_n\geq n-\Delta/2)} \, .
\end{equation*}

By combining this with~(\ref{1}), when $\Delta$ is even, we obtain the following probability estimates
\begin{multline}
P(\lsf_n\geq n-\Delta/2)\leq {\binom{n}{\Delta} \over (n-\Delta)!} \\
\mbox{ and} \quad P(\lsf_n\geq n-\Delta)\geq {1 \over (n-\Delta)!}\;.
\label{l2a}
\end{multline}

The study of the properties of the random variable $\lsf_n$ has a
long history, starting with the pioneering paper of Ulam~\cite{ulam}, 
where the question of asymptotic behavior of $\sE[\lsf_n]$ was
stated. This so-called Ulam's problem deserved attention of many
researchers over several decades. In a sense, the problem was solved
by in the celebrated paper~\cite{BDJ}, 
where the limit law of (properly centered and scaled)
$\lsf_n$ was found. In particular, they showed that for every $t\in
\mathbb{R}$ (as $n$ increases),
\begin{equation}
\label{CLT}
P\left({\lsf_n-2\sqrt{n}\over n^{1\over 6}} \leq  t \right) \to F(t) \; ,
\end{equation}
where $F(t)$ is the distribution function of the Tracy-Widom law. For
a historical overview of Ulam's problem, the proof of~(\ref{CLT}), 
as well as the state of the art, we refer the reader to the book~\cite{lisbook}. 
Since the random variable $\lsf_n$ has been studied for
a relatively long time, one hopes that a proper upper estimate on
the probability $P(\lsf_n\geq n-\Delta)$ (or, alternatively, a lower estimate on the
probability $P(\lsf_n\geq n-\Delta/2)$) gives also a good lower (upper) bound
on ${\cal A}(n,d)$. 

In what follows, we aim at bounding the probability $P(\lsf_n\geq n-\Delta)$ from
above. The following simple estimate can be found in \cite[page 9]{lisbook}:
\begin{equation}
\label{simple}
P(\lsf_n\geq n-\Delta)\leq  
{  \binom{n}{\Delta}\over
(n-\Delta)!} \; .
\end{equation} 
That estimate gives another proof of the lower bound~(\ref{1}). 
In order to improve it, the probability estimate has to be superior to~(\ref{simple}). 
When $n-2\geq \Delta\geq 1$, then the
inequality in (\ref{simple}) is  strict, and that follows from
the use of the Markov inequality in the proof. Hence, the lower bound in~(\ref{l1}) is always tighter than the bound~(\ref{1}). We have the following result. 
\begin{proposition}
The inequality
$$
{(n-\Delta)! \over \binom{n}{\Delta}} \leq {n! \over |\cB(\Delta)|} \; , 
$$
holds and for $0<\Delta<n-1$, the inequality is strict.\
\end{proposition}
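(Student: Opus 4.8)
The plan is to recognize that this proposition is just a restatement of the bound~(\ref{simple}) together with the assertion that it is strict on the indicated range, the two forms being related by reciprocation. First I would invoke~(\ref{sphere}) to write $|\cB(\Delta)|/n! = P(\lsf_n \geq n-\Delta)$, so that the claimed inequality $\frac{(n-\Delta)!}{\binom{n}{\Delta}} \leq \frac{n!}{|\cB(\Delta)|}$ becomes, after taking reciprocals of its two strictly positive sides, exactly
\[
P(\lsf_n \geq n-\Delta) \;\leq\; \frac{\binom{n}{\Delta}}{(n-\Delta)!} \; ,
\]
which is~(\ref{simple}). Since reciprocation of positive quantities reverses the inequality and preserves strictness, it suffices to establish~(\ref{simple}) and to pin down exactly when it is strict.

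For the non-strict part I would reprove~(\ref{simple}) by a first-moment computation, precisely because this exposes where strictness enters. Writing $k \triangleq n-\Delta$, let $N_k$ be the number of increasing subsequences of length $k$ in a uniformly random $\sigma \in \Sn$. Averaging over the $\binom{n}{k}$ choices of position set, each of which is increasing with probability $1/k!$ (equivalently, counting (permutation, subsequence) incidences), gives $\expected[N_k] = \binom{n}{k}/k! = \binom{n}{\Delta}/(n-\Delta)!$. Because $\{\lsf_n \geq k\} = \{N_k \geq 1\}$, Markov's inequality yields $P(\lsf_n \geq k) = P(N_k \geq 1) \leq \expected[N_k]$, recovering~(\ref{simple}).

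The strictness is the crux. The key observation is that for a nonnegative integer-valued variable $\expected[N_k] = \sum_{j\geq 1} P(N_k \geq j)$, so $P(N_k \geq 1) < \expected[N_k]$ if and only if $P(N_k \geq 2) > 0$. I would therefore exhibit a single permutation carrying at least two increasing subsequences of length $k$. For $0 < \Delta < n-1$ we have $2 \leq k \leq n-1$, hence $k < n$ and $\binom{n}{k} \geq 2$; the identity $e$ satisfies $\lsf(e) = n > k$ and so contains all $\binom{n}{k} \geq 2$ of its length-$k$ increasing subsequences. This event has probability $1/n! > 0$, giving $P(N_k \geq 2) > 0$, so the Markov step and thus~(\ref{simple}) are strict; reciprocating back yields the strict inequality of the proposition.

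The main obstacle is conceptual rather than computational: it lies in isolating the exact condition $P(N_k \geq 2) > 0$ under which the Markov/union bound fails to be tight, and in verifying that the range $0 < \Delta < n-1$ is precisely what furnishes a permutation with two or more length-$k$ increasing subsequences. In particular one should note that the excluded boundary $\Delta = 0$, where $k = n$ and only the identity attains $\lsf_n = n$ so that $N_n \in \{0,1\}$, is exactly where equality persists. The expected-value calculation and the reciprocation bookkeeping are routine.
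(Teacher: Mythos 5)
Your proof is correct and follows essentially the same route as the paper: the paper's proof is simply ``apply (\ref{simple}) and (\ref{sphere}),'' i.e., reciprocate the probability estimate (\ref{simple}) using the ball--probability identity (\ref{sphere}), with strictness for $1 \leq \Delta \leq n-2$ attributed (in the text preceding the proposition) to the Markov-inequality step in the proof of (\ref{simple}). The only difference is that you make the argument self-contained by reproving (\ref{simple}) via the first-moment computation and by pinning down strictness through the condition $P(N_k \geq 2) > 0$ witnessed by the identity permutation, details the paper delegates to the citation \cite{lisbook}.
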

\begin{proof} 
Apply (\ref{simple}) and (\ref{sphere}).
\end{proof}
\medskip

\subsubsection*{Bounds for $d = n - c \sqrt{n}$ } 

One of the first probability estimates on $P(\lsf_n>n-d)$ was established
by Kim~\cite{kim}. Thus, for any $t \in \left(0, n^{1\over 3}/20 \right]$, it
holds that
\begin{equation}
\label{kim1}
P\left({\lsf_n-2\sqrt{n} } \geq t n^{1\over 6}\right)\leq  \exp \left[-{4\over
3} t^{3\over 2}+\phi(t) \right] \; ,
\end{equation}
where
\begin{equation*}
\phi(t)=\left({t\over 27
n^{1\over 3}}+{5\ln n\over t^{1\over 2}n^{1\over 3}}\right)t^{3\over
2} \; . 
\end{equation*} 

That estimate leads to the lower bound on ${\cal A}(d,n)$ for $n - c \sqrt{n}$, where $c \in (2, 2 + 1/20]$, which is approximately
$$
\exp\left[ (c-2)^{3\over 2}\left({38-c \over 27}\right) \sqrt{n} \right] \; .
$$ 
This is the same order as the bound $\exp[2\sqrt{n}c (\ln c -1)]$, but the constant in the expression is smaller. 
This bound holds only for $c$ very close to 2 and above 2. 
\medskip

The best code rate estimate for large $d$ is given by the following large
deviation principle~\cite{BDJ}. For every $c>2$, 
\begin{equation}
\label{LD} 
\lim_n {1\over \sqrt{n}} \ln P \left(\lsf_n>c\sqrt{n} \right)  = -I(c) \; ,
\end{equation} 
where
\begin{eqnarray}
-I(c) & = & -2c \,\,{\rm cosh}^{-1}\left({c\over 2}\right)+2\sqrt{c^2-4} \nonumber \\ 
& = & -2c\ln \left({c\over 2}+\sqrt{{c^2\over 4}-1}\right)+2\sqrt{c^2-4} \; . \hspace{5ex} 
\label{i}
\end{eqnarray} 
In terms of the lower bound, (\ref{LD}) can be stated as follows.

\begin{proposition}\label{propLD} For every constant $c>2$, the following convergence
holds: 
\begin{eqnarray*}
\lim_n {1\over \sqrt{n}} \ln \left({n!\over
|\cB(\Delta_n)|}\right) & = & I(c)>2c(\ln c-1) \\
& = & \lim_n {1\over \sqrt{n}} \ln
\left({(n-\Delta_n)! \over \binom{n}{\Delta_n}}\right) \; ,
\end{eqnarray*}
where $\Delta_n=n-c\sqrt{n}-1$ and $-I(c)$ is given in (\ref{i}).
\end{proposition}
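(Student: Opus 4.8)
The plan is to handle the two limits and the middle inequality separately, since each rests on a different ingredient. The two outer equalities are essentially bookkeeping on top of results already available, whereas the strict inequality $I(c) > 2c(\ln c - 1)$ is where the actual (though short) work lies. I would state the proof in three corresponding steps.

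For the leftmost equality I would start from the sphere identity~(\ref{sphere}), which gives $n!/|\cB(\Delta_n)| = 1/P(\lsf_n \geq n - \Delta_n)$. The choice $\Delta_n = n - c\sqrt{n} - 1$ is made precisely so that $n - \Delta_n = c\sqrt{n} + 1$; since $\lsf_n$ is integer-valued, the event $\{\lsf_n \geq n - \Delta_n\}$ coincides with $\{\lsf_n > c\sqrt{n}\}$, matching the form of the large-deviation statement~(\ref{LD}) exactly. Taking logarithms, dividing by $\sqrt{n}$, and applying~(\ref{LD}) then yields $\lim_n \frac{1}{\sqrt{n}}\ln\bigl(n!/|\cB(\Delta_n)|\bigr) = I(c)$. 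The only delicate point is the integer rounding of the threshold $c\sqrt{n}$, but because the rate function is continuous this shifts the threshold by $O(1)$, which vanishes after dividing by $\sqrt{n}$.

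For the rightmost equality I would invoke the preceding Proposition, i.e.\ equation~(\ref{lim}), which gives the same $\tfrac{1}{\sqrt{n}}$-scaled limit $2c(\ln c - 1)$ for $\Delta_n = n - c\sqrt{n}$. The extra $-1$ in the present $\Delta_n$ alters $(n-\Delta_n)!$ and $\binom{n}{\Delta_n}$ only by factors polynomial in $\sqrt{n}$ (a factor $c\sqrt{n}+1$ in the factorial and $(n-c\sqrt{n})/(c\sqrt{n}+1)$ in the binomial coefficient); applying $\tfrac{1}{\sqrt{n}}\ln(\cdot)$ annihilates both, so the limit is unchanged.

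The heart of the argument is the strict inequality, which I would prove by setting $f(c) \triangleq I(c) - 2c(\ln c - 1)$ and showing $f(c) > 0$ on $(2,\infty)$ via differentiation. Using $I(c) = 2c\,\cosh^{-1}(c/2) - 2\sqrt{c^2-4}$ from~(\ref{i}) together with $\frac{d}{dc}\cosh^{-1}(c/2) = 1/\sqrt{c^2-4}$, the two $2c/\sqrt{c^2-4}$ contributions cancel and leave the clean expression $I'(c) = 2\cosh^{-1}(c/2)$. Since $\frac{d}{dc}\bigl[2c(\ln c - 1)\bigr] = 2\ln c$, I obtain $f'(c) = 2\bigl[\cosh^{-1}(c/2) - \ln c\bigr] = 2\ln\bigl((c+\sqrt{c^2-4})/(2c)\bigr)$, which is negative because $\sqrt{c^2-4} < c$ forces the argument below $1$. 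Hence $f$ is strictly decreasing on $(2,\infty)$, and a short expansion ($\cosh^{-1}(c/2) = \ln c - c^{-2} + o(c^{-2})$, $\sqrt{c^2-4} = c - 2c^{-1} + o(c^{-1})$) gives $\lim_{c\to\infty} f(c) = 0$; strict monotonicity then forces $f(c) > 0$ for every finite $c > 2$, as claimed. I expect the main obstacle to be not any single hard step but the care required in the first equality, namely matching the discrete event $\{\lsf_n \geq n - \Delta_n\}$ to the continuous-threshold rate in~(\ref{LD}) and justifying that the integer rounding of $c\sqrt{n}$ is asymptotically negligible; the rightmost limit is a direct citation and the inequality is a one-line calculus argument once the derivative simplification is spotted.
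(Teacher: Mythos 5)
Your proof is correct, and for the two limit equalities it follows essentially the same route as the paper: the left-hand limit comes from the sphere identity~(\ref{sphere}) combined with the large-deviation principle~(\ref{LD}), and the right-hand limit is a citation of~(\ref{lim}) together with the observation that replacing $\Delta_n=n-c\sqrt{n}$ by $\Delta_n=n-c\sqrt{n}-1$ is asymptotically harmless; the paper's entire proof is exactly this, compressed into one line. Where you go beyond the paper is the strict inequality $I(c)>2c(\ln c-1)$: the paper asserts it inside the proposition statement but never justifies it, whereas you supply a complete calculus argument --- the cancellation of the two $2c/\sqrt{c^2-4}$ terms giving $I'(c)=2\cosh^{-1}(c/2)$, hence $f'(c)=2\ln\bigl((c+\sqrt{c^2-4})/(2c)\bigr)<0$ for the difference $f(c)=I(c)-2c(\ln c-1)$, together with the expansion showing $f(c)=2/c+o(1/c)\to 0$ as $c\to\infty$, so that strict monotone decrease forces $f>0$ on $(2,\infty)$. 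I checked these computations and they are correct; this fills a genuine gap in the paper's exposition. One wording quibble on your first step: the events $\{\lsf_n\geq c\sqrt{n}+1\}$ and $\{\lsf_n>c\sqrt{n}\}$ do \emph{not} literally coincide when $c\sqrt{n}$ is not an integer (the latter also contains the outcome $\lsf_n=\lceil c\sqrt{n}\rceil$), so ``matching the form of~(\ref{LD}) exactly'' is an overstatement; however, the sandwich via monotonicity of $c\mapsto P(\lsf_n>c\sqrt{n})$ and continuity of $I$, which you invoke immediately afterwards to dispose of $O(1)$ shifts in the threshold, is precisely the correct repair, so this does not affect the validity of your argument --- and it is in any case more care than the paper itself takes.
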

\begin{proof} Use (\ref{sphere}) together with (\ref{LD}) and
(\ref{lim}). Note that (\ref{lim}) is formally proven for
$\Delta_n=n-c\sqrt{n}$, but it also holds  for
$\Delta_n=n-c\sqrt{n}-1$.
\end{proof}
This proposition yields an asymptotic improvement on the lower bound in~(\ref{1}).

We note that any probability
estimate that is better than the very simple estimate in~(\ref{simple}),
gives a better lower bound on ${\cal A}(n,d)$ in comparison with the
existing lower bound~(\ref{1}). 
Except for large $d$, there are no better estimates known. 
On the other hand, any good upper bound on $|\cB(n-\Delta)|$
entails also a good estimate on the probability $P(\lsf_n\geq \Delta)$.
Since the probabilities $P(\lsf_n\geq \Delta)$ are closely related to the Tracy-Widom
distribution, such a link between coding and probability theory might be valuable.

\section{Computational results} 

In this section, we present computational results related to the optimal codes. It turns out that there exist non-trivial Ulam-metric codes, which attain the Singleton bound with equality. We call such codes \emph{Singleton-optimal}. Singleton-optimal Ulam $(n,d)$ codes are also known as \emph{perfect deletion-correcting codes on $n$ distinct symbols, capable of correcting $d-1$ deletions} and as \emph{directed Steiner systems}~\cite{Levenshtein}. They exist for every $n$ and $d=2$~\cite{Levenshtein}, and also for $n=6$ and $d=3$~\cite{Mathon}. It was also found by the exhaustive search in~\cite{Mathon} that $\cA (7,4) = 12$. We have complemented these results for other pairs $(n,d)$.

Table~\ref{tab:size} summarizes what is known about $\cA (n,d)$. These results are obtained by using computer search, and they improve on the theoretical bounds in many cases. Table~\ref{tab:yes} summarizes the experimental results on the existence of the Singleton-optimal $(n,d)$ Ulam codes.

\begin{table*}[t]
\setlength{\tabcolsep}{12pt}
\def\arraystretch{1.3}
\begin{center}
\begin{tabular}{ r | c c c c c c } 
\hline \hline
          & $d = 2$ & $d = 3$ & $d = 4$ & $d = 5$ & $d = 6$ & $d = 7$ \\ \hline 
$n = 4$ &          6 & 2 & -- & -- & -- & -- \\ 
$n = 5$ &         24 & 4 & 2 & -- & -- & -- \\ 
$n = 6$ &         120 & 24 & 4 & 2 & -- & -- \\ 
$n = 7$ &         720 & $\geq 59 \mbox{ and} < 120 $  & $12$  & 4 & 2 & -- \\  
$n = 8$ &         5040 & ? & $<120$ & $\leq 12$ & 4 & 2 \\ 
$n = 9$ &         40320 & ? & ? & $<120$ & $\leq 12$ & 2 \\ \hline \hline
\end{tabular}        
\end{center}
\caption{Known maximum sizes of codes in the Ulam metric.}
\label{tab:size}
\end{table*}

{\small
\setlength{\tabcolsep}{8pt}
\def\arraystretch{1.3}
\begin{table}[hb]
\begin{center}
\begin{tabular}{ r | c  c  c  c  c  c } 
\hline \hline
       & $d = 2$ & $d = 3$ & $d = 4$ & $d = 5$ & $d = 6$ & $d = 7$ \\ 
			\hline 
$ n = 4$ &  yes & yes & -- & -- & -- & -- \\ 
$ n = 5$ &          yes & no & yes & -- & -- & -- \\ 
$ n = 6$ &           yes & yes & no & yes & -- & -- \\ 
$ n = 7$ &           yes & no & no & no & yes & -- \\ 
$ n = 8$ &         yes & ? & no & no & no & yes \\ 
$ n = 9$ &         yes & ? & ? & no & no & no \\ 
\hline \hline
\end{tabular}		
\end{center}
\caption{The existence of Singleton-optimal codes in the Ulam metric.}
\label{tab:yes}
\end{table}
}

In order to obtain these results, we construct the graph on the vertex set $\Sn$ with an edge if and only if the corresponding vertices are at least a distance $d$ away. Our goal is to find a clique of the maximum size.

We assign colors to the vertices of this graph, such that color class of a permutation corresponds to the relative ordering of symbols $1,2,...,n-d+1$ in the one-line notation of the permutation. The existence of a Singleton-optimal code becomes equivalent to the property that the clique number of the graph is equal to its chromatic number. Thus, in order to obtain a Singleton-optimal code, we need to pick exactly one vertex from each color class, such that the induced graph forms a clique. 

To obtain a maximum-size code when a Singleton-optimal code does not exist, we need to pick at most one vertex from each color class. This makes the respective exhaustive search computationally much harder.

\vspace{-1ex}

\section{Acknowledgements}

The work of the authors is supported in part by the Estonian Research Council through the research grants PUT405, PUT620, IUT2-1, IUT20-57 and IUT34-5, by the Estonian Science Foundation through the grant ETF9288, and by the European Regional Development Fund through the Estonian Center of Excellence in Computer Science, EXCS. The authors wish to thank Dirk Oliver Theis for helpful discussions.



\end{document}